\newtheorem{theorem}{Theorem}
\newtheorem{lemma}{Lemma}
\newenvironment{proofof}[1]{\begin{proof}[Proof of #1]}{\end{proof}}
\newtheorem*{rep@theorem}{\rep@title}
\newcommand{\newreptheorem}[2]{%
\newenvironment{rep#1}[1]{%
 \def\rep@title{#2 \ref{##1}}%
 \begin{rep@theorem}}%
 {\end{rep@theorem}}}
\newtheorem*{rep@lemma}{\rep@title}
\newcommand{\newreplemma}[2]{%
\newenvironment{rep#1}[1]{%
 \def\rep@title{#2 \ref{##1}}%
 \begin{rep@lemma}}%
 {\end{rep@lemma}}}
\newcommand{\id}{\sigma_0}
\newcommand{\sigmagreedy}{\sigma_{\sf alg}}
\title{A Short Proof of Convexity of Step-out Step-in Sequencing Games}
\author{Coulter Beeson ~ Neil Olver \\ Department of
Mathematics, London School of
Economics and Political Science, United Kingdom
}
\date{}
\begin{document}
\maketitle

\begin{abstract}
    The Step out--Step in sequencing game is a particular example of a game from the sequencing game framework of Curiel, Perderzoli, and Tijs, where coalitions of players in a queue may reorder themselves to improve the their overall cost, under some restrictions.
    Musegaas, Borm and Quant proved, in two papers, that a simple greedy algorithm correctly computes the valuation of a coalition, and that the game is convex.
    These proofs entail rather involved case analyses;
    in this note, we give short proofs of both results.
\end{abstract}

\section{Introduction}

There are many natural settings where people or tasks form a queue and incur some cost for how long they wait. Cureil, Pederzoli, and Tijs~\cite{curiel1989sequencing} took a cooperative game theory perspective and asked the questions ``how best can a coalition of players re-arrange themselves to save costs?'' and ``how can these cost savings be shared?'' 
The main assumption is that players outside of this coalition should be no worse off. How one interprets the notion of ``no worse off'' gives rise to many different variants of sequencing games. 
Musegaas, Brom, and Quant~\cite{musegaas2015step} introduced the Step out--Step in (SoSi) sequencing game where coalition members cannot move ahead of non-coalition members. They provide a greedy algorithm for computing the optimal cost savings for a coalition in their original paper~\cite{musegaas2015step}.
In a follow-up work, they prove that the SoSi game is convex~\cite{musegaas2018convexity}.
This has important implications for the cooperative game, e.g., that the core is nonempty and can be efficiently computed when the characteristic function can be computed efficiently~\cite{shapley1971cores}.
Both of these proofs are rather lengthy, and involve a large number of case distinctions to complete the argument.

In this paper we provide greatly simplified proofs of the convexity of SoSi games and the correctness of the greedy algorithm. 
Throughout we assume familiarity with the basic concepts of cooperative game theory (see, e.g., \cite{peleg2007introduction}). 
We start by formally introducing SoSi games along with other preliminaries.
The proofs follow in \Cref{sec:proofs}.

\subsection{Single machine scheduling and sequencing games}
Consider the very classical setting of minimizing the weighted sum of completion times, on just a single machine.
We are given a set $N$ of $n$ players, processing times $p \in \mathbb{R}^N_{\ge0}$, and weights $w \in \mathbb{R}^N_{\geq 0}$.
Given an ordering $\sigma: N \to \{1,2,\ldots, n\}$ of the jobs, 
job $j$ will complete at time $c_j(\sigma) := \sum_{k \in N: \sigma(k) \leq \sigma(j)} p_j$, and incur a cost of $C_j(\sigma) := w_j c_j(\sigma)$.
The goal is to minimize the total cost $\sum_{j \in N} C_j(\sigma)$.
The celebrated \emph{Smith's rule}~\cite{Smith1956} states 
that a minimum cost solution is obtained by ordering the jobs by decreasing \emph{urgency} $u_j := w_j/p_j$.

In a \emph{sequencing game}, this machine scheduling setting is augmented by an initial ordering $\sigma_0$ of the players.
Further, a set of \emph{admissible orders} available to a coalition $S$, denoted by $\mathcal{A}(S, \sigma_0)$, is described in some way.
Then the \emph{value} of a coalition $S$ is
\[ v(S) := C_S(\id) - \min_{\sigma^* \in \mathcal{A}(S, \sigma_0)} C_S(\sigma^*), \]
where $C_S(\sigma) := \sum_{i \in S} C_i(\sigma)$ denotes the total cost of players in coalition $S$ under the given ordering.
In the SoSi game, an order $\sigma$ is admissible for a coalition $S$ if for all $i \notin S$ and $j \in N$ with $\id(i) \leq \id(j)$, we have $\sigma(i) \leq \sigma(j)$. 

Musegaas et al.~\cite{musegaas2015step} consider the following greedy algorithm for computing $v(S)$ 
(in fact, they consider a variation of this which is more complicated to describe, but which is equivalent; see \Cref{lem:greedyfacts}). 
Start with the order $\sigma' = \id$, and consider each player $i \in S$ in turn, from latest to earliest according to $\sigma_0$.
When considering player $i$, we update $\sigma'$ by moving the player to a position \emph{later} in the ordering that yields the greatest cost savings for $S$ (breaking ties by choosing the earliest optimal position); if no such move yields an improvement, then $\sigma'$ is left unchanged.
After all players in $S$ have been considered, the algorithm returns $C_S(\id) - C_S(\sigma')$ as the value of coalition $S$.

They prove (in two different works) the following theorems. 
\begin{theorem}\label{greedy}\cite{musegaas2015step}
The greedy algorithm correctly computes $v(S)$.
\end{theorem}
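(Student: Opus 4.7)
My plan is to prove correctness of the greedy algorithm via a structural reformulation combined with an inductive exchange argument.

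I would first establish a clean description of the admissible orderings. Let $n_1, \ldots, n_m$ be the non-coalition members in $\sigma_0$ order, and call one of the $m+1$ regions between consecutive $n_g$'s a ``gap''. The admissibility constraint turns out to be equivalent to requiring each $s \in S$ to lie in some gap $\gamma(s)$ at least as late as its initial gap $\gamma_0(s)$, with the within-gap ordering of coalition members unconstrained. Within each gap, Smith's rule (decreasing urgency $u_s = w_s/p_s$) is optimal: the within-gap contribution to $C_S$ is a standalone weighted-completion-time problem with a fixed start time that depends only on which coalition members lie in earlier gaps, and Smith's rule applies. The problem thus reduces to optimizing the gap assignment $\gamma$. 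In this language, greedy processes coalition members in reverse $\sigma_0$ order, at each step assigning the current member to the gap minimizing $C_S$ given the positions of those already processed (at their greedy gaps) and those not yet processed (at their initial gaps).

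I would then prove optimality by induction on the iteration. The inductive claim is that after greedy processes the last $j$ coalition members, the resulting ordering $\sigma^{(j)}$ is optimal among admissible orderings in which only those $j$ members are permitted to move from $\sigma_0$. The base case $j = 0$ is trivial; $j = |S|$ yields the theorem. The inductive step requires showing that, starting from any optimum of the ``$(j+1)$-movable'' problem, one can produce another optimum in which the newly-unfrozen coalition member $s$ is placed at greedy's chosen gap.

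The main obstacle is this exchange step. Greedy chooses $s$'s gap assuming the not-yet-processed coalition members remain at their initial positions, whereas a candidate optimum may have moved them to later gaps. The exchange argument must show that relocating $s$ to greedy's gap in such a candidate does not increase $C_S$, potentially after compensating adjustments to the not-yet-processed members. The structural feature to exploit is that the already-processed members (those later than $s$ in $\sigma_0$) all have initial gaps $\geq \gamma_0(s)$ and are thus precisely the members whose placements materially affect greedy's calculation for $s$; the not-yet-processed members, initially in gaps $\leq \gamma_0(s)$, should contribute only canceling terms in a careful accounting of the marginal cost $\Delta C_S$ of changing $s$'s gap.
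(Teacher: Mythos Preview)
Your structural reformulation (gap assignment plus Smith's rule within gaps) is sound, and your induction on greedy iterations is a natural thing to try. But the exchange step has a real gap, and your diagnosis of the obstacle is backwards.

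In the $(j{+}1)$-movable problem you define, the \emph{not-yet-processed} members (those earlier than $s$ in $\sigma_0$) are by definition \emph{frozen}; they sit at their $\sigma_0$ positions in every candidate optimum $\tau$ just as they do in greedy's computation. So your stated obstacle---that ``a candidate optimum may have moved them to later gaps''---cannot occur. The genuine difficulty is the opposite one: the \emph{already-processed} members (the last $j$) may occupy positions in $\tau$ that differ from their positions in $\sigma^{(j)}$. Greedy selects $s$'s gap to maximize the saving $\delta_{sR'}(S)$ over prefixes $R'$ of the suffix of $\sigma^{(j)}$, whereas $\tau$'s placement of $s$ corresponds to a prefix $R$ of the suffix of a \emph{different} ordering $\tau_s$ (namely $\tau$ with $s$ pushed back to $\sigma_0(s)$). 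Your induction gives $C_S(\sigma^{(j)}) \le C_S(\tau_s)$, but to conclude $C_S(\sigma^{(j+1)}) \le C_S(\tau)$ you would also need $\delta_{sR'}(S) \ge \delta_{sR}(S)$, and nothing in your setup supplies this: $R$ need not be a prefix of $\sigma^{(j)}$ at all. Put differently, you would need that the optimal placement of the last $j$ members is \emph{independent} of where $s$ sits---a nontrivial structural claim you have not argued and which your proposal does not even identify.

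The paper avoids this entirely by inducting on the \emph{instance size} rather than on iterations. It first uses \Cref{lem:greedyfacts} to assume each component of $S$ is Smith-ordered in $\sigma_0$, then peels off the first non-coalition player $q$ with $P$ the coalition prefix and $W$ the suffix. The key move is to compare against an optimum $\sigma^*$ by looking at $T^* \subseteq P$, the set of players that cross $q$ in $\sigma^*$, and invoking the inductive hypothesis on the \emph{subgame} $W \cup T^*$. The Smith-ordering assumption makes the reordering cost $\kappa$ within $P$ nonpositive, which closes the inequality. There is no exchange against a differently-arranged suffix; the induction hypothesis is applied to a smaller instance where greedy is already known to be exactly optimal.
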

\begin{theorem}\label{convex}\cite{musegaas2018convexity}
 The step-out step-in sequencing game is convex. 
\end{theorem}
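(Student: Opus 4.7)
I will first reformulate convexity as a submodularity statement about the min-cost function. Define $f(S) := \min_{\sigma \in \mathcal{A}(S, \id)} C_S(\sigma)$, so that $v(S) = C_S(\id) - f(S)$. Since $C_S(\id) = \sum_{j \in S} w_j c_j(\id)$ is additive in $S$, the convexity of $v$ is equivalent to the submodularity of $f$: for all $S \subseteq T \subseteq N$ and $i \notin T$,
\[ f(T \cup \{i\}) + f(S) \leq f(T) + f(S \cup \{i\}). \]

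To prove this submodularity I will use an exchange argument, based on \Cref{greedy}. Let $\sigma_T$ and $\sigma_{S \cup \{i\}}$ denote the greedy-optimal orderings achieving $f(T)$ and $f(S \cup \{i\})$ respectively. The plan is to construct admissible orderings $\sigma \in \mathcal{A}(T \cup \{i\}, \id)$ and $\sigma' \in \mathcal{A}(S, \id)$ such that
\[ C_{T \cup \{i\}}(\sigma) + C_S(\sigma') \leq C_T(\sigma_T) + C_{S \cup \{i\}}(\sigma_{S \cup \{i\}}), \]
which delivers the submodularity inequality via $f(T \cup \{i\}) \leq C_{T \cup \{i\}}(\sigma)$ and $f(S) \leq C_S(\sigma')$.

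The construction exploits a simple characterization of admissibility: $\sigma \in \mathcal{A}(C, \id)$ corresponds to a placement of each $j \in C$ into a ``slot'' at or after its original slot among the non-coalition members, with Smith's rule determining the order within each slot. I plan to define $\sigma$ by borrowing the slot assignment of $\sigma_T$ for the $T$-members and placing $i$ in the slot (re-indexed) that $\sigma_{S \cup \{i\}}$ assigns to it, and to define $\sigma'$ from $\sigma_{S \cup \{i\}}$ by restoring $i$ to an admissible position for the coalition $S$ (where $i$ is now non-coalition and must respect the admissibility condition).

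I expect the main obstacle to be the re-indexing between slot frames of the four coalitions, which differ since the underlying non-coalition sets differ, and a slot boundary appears or disappears as $i$ switches role. Once the two orderings are defined, the target cost inequality should reduce to a pair-by-pair accounting: each pair $(a, b)$ with $\id(a) < \id(b)$ contributes specific amounts to each of the four costs, depending on which coalitions $a$ and $b$ lie in and on their relative order. The key insight should be that any favourable coalition-coalition swap realized in $\sigma_{S \cup \{i\}}$ remains realizable in $\sigma$ (since $T \cup \{i\}$ offers strictly more flexibility than $S \cup \{i\}$), while losses from passing non-coalition members transfer monotonically along the inclusion $S \subseteq T$, yielding the required inequality.
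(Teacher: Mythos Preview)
Your approach is genuinely different from the paper's. The paper does not attempt an exchange argument at all: it uses \Cref{greedy} to write $v(S)=\sum_{i\in S} f_i(S)$, where $f_i(S)$ is the saving obtained in the step of the greedy algorithm that moves $i$; then it observes that $f_i(S)=\max_r \sum_{j} \delta_{ij}(S)$ is a maximum of prefix sums of the monotone modular functions $\delta_{ij}(S)=\mathbbm{1}_S(j)p_iw_j-p_jw_i$, and concludes supermodularity from a short Lov\'asz-type lemma. No hybrid orderings or case accounting are needed.

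Your proposal, by contrast, is a plan rather than a proof, and its central step looks problematic. You build $\sigma$ by keeping the $T$-members in their $\sigma_T$ positions and dropping $i$ into its $\sigma_{S\cup\{i\}}$ slot ``re-indexed'' to the $T\cup\{i\}$ frame; but the slot frames for $S\cup\{i\}$ and $T\cup\{i\}$ differ precisely by the players in $T\setminus S$, whose positions in $\sigma_T$ need not resemble their (fixed, non-coalition) positions in $\sigma_{S\cup\{i\}}$. So the ``same slot, re-indexed'' is not well defined without further choices, and there is no reason the resulting $c_i(\sigma)$ should relate cleanly to $c_i(\sigma_{S\cup\{i\}})$. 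Your stated ``key insight''---that favourable coalition--coalition swaps realized in $\sigma_{S\cup\{i\}}$ remain realizable in $\sigma$---does not help here: $\sigma$ does not attempt to realize those swaps at all, since everything except $i$ comes from $\sigma_T$. What you actually need is that the \emph{marginal} effect of moving $i$ (past a given block) is at least as favourable when the surrounding ordering is $\sigma_T$ and the accounting is over $T\cup\{i\}$ as when the surrounding ordering is $\sigma_{S\cup\{i\}}$ and the accounting is over $S\cup\{i\}$; that is a statement about the functions $\delta_{i\,\cdot}(\cdot)$ and the structure of greedy prefixes, not about admissibility, and it is exactly what the paper isolates via the prefix-max lemma. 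Without that ingredient, the pair-by-pair accounting you hope for has cross terms (pairs $(i,m)$ with $m\in T\setminus S$) that do not cancel, and the inequality $C_{T\cup\{i\}}(\sigma)+C_S(\sigma')\le C_T(\sigma_T)+C_{S\cup\{i\}}(\sigma_{S\cup\{i\}})$ is not established.
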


Their proof of convexity relies on the correctness of the greedy
algorithm; ours will also. Our proof of Theorem 2, once Theorem 1 has been obtained, is particularly short and simple.

\section{Proofs}\label{sec:proofs}

\paragraph{Preliminaries.}
We start by defining some useful notation. 
For any two players $j,k \in N$ let
\[
    \delta_{jk}(S):=\mathbbm{1}_{S}(k)p_j w_k-\mathbbm{1}_{S}(j)p_kw_j,
\]
where $\mathbbm{1}_{S}$ is the indicator function for the set $S$.
The interpretation of this is the decrease in cost for a coalition $S$ obtained by swapping $j$ and $k$, given that in the current order $j$ directly precedes $k$. Note that if $u_i > u_j$, then $\delta_{ij}<0$, and $\delta_{ij}=-\delta_{ji}$.
We also define, for $P \subset N$ and $j \in N \setminus P$, 
\[ \delta_{jP}(S):=\sum_{k\in P} \delta_{jk}(S); \]
if $P$ is a contiguous sets of elements in some order, with $j$ adjacent to and preceding $P$, this is the cost decrease obtained by moving $j$ to immediately after $P$. Similarly $\delta_{Pj}(S) := -\delta_{jP}(S)$ is the cost decrease obtained by moving $j$ from immediately after $P$ to immediately before $P$.
We will sometimes omit the explicit dependence on $S$ (writing, for example, $\delta_{ij}$ rather than $\delta_{ij}(S)$) when it can cause no ambiguity.

The subgame restricted to the players $T\subset N$, whose value function we denote by $v^T$, is the sequencing game arising from just the players $T$ with the same weights and processing times, and initially ordered according to the same relative order as $\sigma_0$.
For notational convenience, we define $v^T(S) := v^T(S \cap T)$ for all $S \subseteq N$.

For a given coalition $S$ and an order $\sigma$, a \emph{component} refers to a maximal subset of $S$ that is contiguous with respect to $\sigma$; that is, an inclusion-wise maximal set of the form $\{ j \in N: \sigma(i) \leq \sigma(j) \leq \sigma(k)\} \subseteq S$ for some $i,k\in S$.

\paragraph{The greedy algorithm.}
\begin{lemma}\label{lem:swap}
    For any distinct $j,k \in S$ with $u_j \leq u_k$ and $\ell \in N$, $w_j^{-1}\delta_{j\ell} \geq w_k^{-1}\delta_{k\ell}$.
\end{lemma}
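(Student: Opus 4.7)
The plan is to prove this by a direct algebraic expansion of the definitions; there is no real obstacle.

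First I would exploit that $j,k \in S$ means $\mathbbm{1}_S(j) = \mathbbm{1}_S(k) = 1$, so the definition of $\delta$ simplifies to
\[
    \delta_{j\ell}(S) = \mathbbm{1}_S(\ell)\, p_j w_\ell - p_\ell w_j, \qquad \delta_{k\ell}(S) = \mathbbm{1}_S(\ell)\, p_k w_\ell - p_\ell w_k.
\]
Dividing by $w_j$ and $w_k$ respectively, and recalling that $u_j = w_j/p_j$ and $u_k = w_k/p_k$, I would rewrite these as
\[
    w_j^{-1}\delta_{j\ell}(S) = \mathbbm{1}_S(\ell)\, \frac{w_\ell}{u_j} - p_\ell, \qquad w_k^{-1}\delta_{k\ell}(S) = \mathbbm{1}_S(\ell)\, \frac{w_\ell}{u_k} - p_\ell.
\]

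The two expressions differ only in the coefficient of $\mathbbm{1}_S(\ell) w_\ell$, namely $1/u_j$ versus $1/u_k$. Since $u_j \le u_k$ and $w_\ell \ge 0$, we have $w_\ell/u_j \ge w_\ell/u_k$, so the claimed inequality follows. If $\ell \notin S$ then both sides simply equal $-p_\ell$; if $\ell \in S$ the strict form of the comparison gives the inequality. In either case the result is immediate, so the lemma reduces to this one-line calculation.
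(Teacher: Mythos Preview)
Your proof is correct and is essentially the same direct computation as in the paper: both reduce to comparing $p_j/w_j$ with $p_k/w_k$ (you write this as $1/u_j$ versus $1/u_k$), and the paper merely splits into the cases $\ell\in S$ and $\ell\notin S$ explicitly rather than carrying the indicator $\mathbbm{1}_S(\ell)$ through.
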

\begin{proof}
    If $\ell \notin S$, then $w_{j}^{-1}\delta_{j\ell} = -p_{\ell} = w_{k}^{-1} \delta_{k\ell}$.
    If $\ell \in S$, then 
    \[ w_j^{-1}\delta_{j\ell} = w_{\ell} \cdot\tfrac{p_j}{w_j} - p_{\ell} \geq w_{\ell} \cdot \tfrac{p_k}{w_k} - p_{\ell} = w_k^{-1} \delta_{k\ell}. \qedhere \]
\end{proof}

Let us say that two orders $\sigma$ and $\sigma'$ are \emph{equivalent} for a coalition $S$ if $\sigma'$ can be obtained from $\sigma$ via a sequence of swaps of adjacent players in $S$ with equal urgency.
Since $\delta_{jk}(S) = p_jp_k(u_k - u_j) = 0$ for $j,k \in S$ with $u_j = u_k$, it follows that $C_S(\sigma) = C_S(\sigma')$ for equivalent orderings $\sigma$ and $\sigma'$.
\begin{lemma}\label{lem:greedyfacts}
    Let $\sigma'_0$ be any order obtained from $\sigma_0$ by rearranging only players within the same component of $S$.
    Let $\sigmagreedy$ and $\sigmagreedy'$ be the orders determined by the greedy algorithm starting from $\sigma_0$ and $\sigma'_0$ respectively.
    Then
    \begin{enumerate}[(i)] 
        \item $\sigmagreedy$ and $\sigmagreedy'$ are equivalent for coalition $S$.
        \item Any two players in the same component of $S$ respect Smith's rule in both $\sigmagreedy$ and $\sigmagreedy'$
            (i.e., are correctly ordered by urgency).
    \end{enumerate}
\end{lemma}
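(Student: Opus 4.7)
The plan is to prove (ii) directly from a local-optimality property of the greedy's move, and then to deduce (i) via a transposition argument.

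The key observation for (ii) is this: when the greedy moves $i_t \in S$ past a contiguous block $P^*$, optimality together with the earliest-tiebreaking rule force $\delta_{i_t, \ell} > 0$ for the last element $\ell$ of $P^*$ (otherwise removing $\ell$ would tie or improve and give an earlier position), and $\delta_{i_t, q} < 0$ for the player $q$ immediately following $P^*$ in the current order, if any (otherwise appending $q$ would tie or improve). In terms of urgencies, this forces $\ell \in S$ with $u_\ell > u_{i_t}$, and $q$ either outside $S$ or in $S$ with $u_q < u_{i_t}$. Now suppose (ii) fails for $\sigmagreedy$: there are adjacent $j,k$ in a single component with $j$ preceding $k$ and $u_j < u_k$. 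I split on which of $j,k$ has the larger $\sigma_0$-index. If $k$ has the larger index, then $k$ is processed before $j$, and at the end of $j$'s step the immediate successor $k'$ of $j$ cannot be $k$ (since $k \in S$ with $u_k > u_j$ violates the key observation); since $k$'s index exceeds $j$'s, $k$ is not moved in any subsequent step, so $k'$ persists strictly between $j$ and $k$ in $\sigmagreedy$, contradicting adjacency. If $j$ has the larger index, then $j$ is processed first, and in order for $j$ to precede $k$ in $\sigmagreedy$ the greedy must include $j$ in the block $P^*$ moved past at $k$'s step; then the last element of $P^*$ has urgency exceeding $u_k > u_j$, is therefore distinct from $j$, and remains strictly between $j$ and $k$ in all subsequent states (which preserve the relative order of already-processed players), again contradicting adjacency. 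Applied symmetrically to $\sigmagreedy'$, this establishes (ii).

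For (i), since within-component permutations are generated by transpositions of adjacent $S$-players in the same component, it suffices to handle a single such transposition of $a,b$ and iterate. The greedy's processing of all $S$-players with $\sigma_0$-index exceeding both $a$ and $b$ is identical in the two runs (these players are moved only to the right and do not interact with the positions of $a,b$), leading to states that differ only by the swap of $a$ and $b$. The two runs then process $a$ and $b$ in opposite orders. Using \Cref{lem:swap} to compare normalized swap-gains $w^{-1}\delta$ for $a$ against $b$ across each candidate move, one shows that after both $a$ and $b$ have been processed, the two states coincide (or differ only by an equivalence swap when $u_a = u_b$). Since subsequent greedy moves depend only on the intrinsic $w,p$ parameters and on the cost-equivalence class of the current state, this propagates to equivalence of the final outputs.

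The main obstacle is precisely this confluence step in the proof of (i): showing that the two-step greedy processing of an adjacent same-component pair $a,b$ is independent, up to equivalence, of which player is processed first. This requires a careful comparison of the candidate blocks considered by the greedy for $a$ versus $b$ in the two runs, and is where \Cref{lem:swap} plays its central role.
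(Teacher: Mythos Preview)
Your proposal is sound, and the two parts relate to the paper's proof differently.

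For (i), your approach coincides with the paper's: reduce to a single adjacent transposition, observe that the common suffix is processed identically, and then use \Cref{lem:swap} for the two-player confluence. The paper carries out precisely this computation, showing that if $u_j\ge u_k$ then the optimal prefix index $s^*$ for $j$ satisfies $s^*\le t^*$ (the optimal index for $k$), whence both runs place $j$ before $k$ (or swap them when $u_j=u_k$). Your sketch stops just short of this, but correctly names the obstacle and the tool.

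For (ii), however, you take a genuinely different route. The paper derives (ii) \emph{from} (i): once equivalence under adjacent swaps is known, one may reduce to $j,k$ adjacent in $\sigma_0$, and the confluence calculation already puts them in Smith order. Your argument instead works directly on $\sigmagreedy$ via local optimality of each greedy move, without invoking (i) at all; this is arguably cleaner and decouples the two claims. Two small points: your key observation only gives $\delta_{i_t,q}\le 0$ (ties go to the earlier slot, so equality is possible), but that is all you need, since the contradiction comes from the \emph{strict} inequality $\delta_{j,k}>0$ when $j,k\in S$ with $u_k>u_j$. And your final ``propagation'' sentence in (i) is slightly glib: subsequent greedy moves do depend on the actual order, not merely its equivalence class, so one must note (as the paper does) that any inserted player goes either entirely before or entirely after a contiguous equal-urgency block of $S$-players, whence insertion into equivalent orders yields equivalent orders.
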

\begin{proof}
    It suffices to show that the claim holds if $\sigma'_0$ differs from $\sigma_0$ by swapping two adjacent players $j,k \in S$, with $\sigma_0(j) < \sigma_0(k)$, since any valid reordering can be obtained via a sequence of such swaps.
    By symmetry of $\sigma_0$ and $\sigma'_0$ we may assume $u_j \geq u_k$.
    Let $W := \{ i \in N : \sigma_0(k) < \sigma_0(i) \}=\{ i \in N : \sigma'_0(j) < \sigma'_0(i) \}$, let $\sigma^W$ be order returned by greedy algorithm for the sub-problem on $W$. 
    Let $\emptyset = W_0 \subsetneq W_1 \subsetneq \cdots \subsetneq W_r = W$ be all prefixes of $W$ under $\sigma^W$ 
    (i.e., all distinct sets of the form $\{ x \in N: \sigma^W(x) \leq \sigma^W(y)\}$ for some $y$, along with the empty set).

    Let $t^*$ be such that $\delta_{k W_{t^*}}$ is maximized, and $s^*$ be such that $\delta_{j W_{s^*}}$ is maximized (breaking ties by choosing $t^*$ and $s^*$ as small as possible).
    Consider any $t > t^*$, and let $Q = W_t \setminus W_{t^*}$. 
    By the choice of $t^*$, $\delta_{k Q} = \delta_{k W_t} - \delta_{k W_{t^*}} \leq 0$.
    Thus
    \begin{align*}
        \delta_{j W_t} &= \delta_{j W_{t^*}} + \delta_{j Q}\\
                       &\leq \delta_{j W_{t^*}} + \frac{w_j}{w_k} \delta_{k Q} && \text{ (by \Cref{lem:swap})}\\
                       &\leq \delta_{j W_{t^*}}.
    \end{align*}
So $s^* \leq t^*$. 

    Since also $\delta_{jk} \leq 0$,
    it follows that when running the greedy algorithm from $\sigma_0$, $j$ will be placed before $k$ in the final ordering.
    On the other hand, starting from $\sigma_0'$, $j$ will first be placed immediately to the right of $W_{s^*}$, and $k$ will be placed to the right of $j$ if $u_j > u_k$, or immediately to the left of $j$ if $u_j = u_k$ (since then $s^* = t^*$ and $\delta_{jk} = 0$).
    That is, the orders obtained by the greedy algorithm after $j$ and $k$ have been considered starting from $\sigma_0$ and $\sigma'_0$ are equivalent; call these orders $\sigma$ and $\sigma'$ respectively.
    Note that this also proves the second part of the lemma for $j$ and $k$ adjacent.
    
    It suffices to show that inserting the remaining players into $\sigma$ and $\sigma'$ does not violate equivalence. 
    The optimal position to place any player $\ell$ into any two equivalent orderings is the same. 
    If there is a group $P$ of contiguous players with equal urgency, 
    then $\ell$ will either placed ahead of $P$ (if $\delta_{\ell P} \leq 0$ or behind $P$ (if $\delta_{\ell P}>0$), irrespective of the ordering of the players in $P$.

    \medskip

    We now prove the second part of the lemma in generality, where $j$ and $k$ are in the same component but not necessarily adjacent.
    Assume, relabelling if necessary, that $u_j > u_k$ (if $u_j = u_k$, there is nothing to prove).
    Then as already observed, $\sigmagreedy'(j) < \sigmagreedy'(k)$.
    Since $\sigmagreedy$ is equivalent to $\sigmagreedy'$ by the first part of the lemma, $\sigmagreedy(j) < \sigmagreedy(k)$ as well.
\end{proof}

We are now ready to prove that the greedy algorithm correctly computes $v(S)$.
\begin{proofof}{\Cref{greedy}}
    By \Cref{lem:greedyfacts}, it suffices to prove the claim under the assumption that each component of $S$ is ordered in $\sigma_0$ according to Smith's rule.

    Assume inductively that the claim holds for all restrictions of the instance to a strict subset of players.
    If $S=N$, then $\sigma_0$ is already optimal and the greedy algorithm will not change it, so the claim is clear.
    So assume $S \neq N$, and let $q$ be the earliest player under $\sigma_0$ not in $S$. 
Let $P \subset S$ be the players preceding $q$, and $W:= N \setminus \bigl( P \cup \{q\} \bigr)$ be all players after $q$.

Denote by $\sigma$ the ordering at the point just before any player in $P$ is considered, and by $\sigmagreedy$ the final ordering determined by the greedy algorithm.
Also let $\emptyset = W_0 \subsetneq W_1 \subsetneq \cdots \subsetneq W_r = W$ be all the prefixes of $W$ under $\sigma$.
By \Cref{lem:greedyfacts}, the order of the players in $P$ is maintained in $\sigmagreedy$. %
This means that the cost decrease (if any) attributable to moving player $i \in P$ is $\max\{0, \gamma_i\}$, where
\[ \gamma_i := \delta_{iq} + \max_{0 \leq k \leq r} \delta_{iW_k}. \]
In other words, either we don't move $i$ at all and there is no change, or we obtain a cost decrease by moving $i$ to some location after $q$.
So
\begin{equation}\label{eq:greedycost}
    C_S(\sigma_0) - C_S(\sigmagreedy) = C_S(\sigma_0) - C_S(\sigma) + \sum_{i \in P} \max\{0, \gamma_i\}.
\end{equation}

Consider now, for any fixed $T \subseteq P$, the instance restricted to $W \cup T$.
The greedy algorithm will again first order $W$ in exactly the same way as $\sigma$ 
and then insert each player in $T$ into their optimal positions.
The cost decrease associated with player $i \in T$ is then
\[ \max_{0 \leq k \leq r} \delta_{i W_k} = \gamma_i - \delta_{iq}.\]
By induction, the result of the greedy algorithm on this instance is optimal, and so we know that 
\begin{equation}\label{eq:greedyopt}
    v^{W \cup T}(S) = C_S(\sigma_0) - C_S(\sigma) + \sum_{i \in T} (\gamma_i - \delta_{iq}). 
\end{equation}

Consider now an optimal ordering $\sigma^*$. 
Let $T^*$ be the subset of jobs in $P$ that are later than $q$ under $\sigma^*$.
Let $\kappa$ be the cost decrease associated with moving players in $T^*$ past players in $P$: so $\kappa = \sum_{j \in T^*, k \in P \setminus T^*: \sigma_0(j) < \sigma_0(k)} \delta_{jk}$.
By assumption $u_j \geq u_k$ and hence $\delta_{jk} \leq 0$ for all the terms in this sum, so $\kappa \leq 0$.
Then we have
\begin{align*}
    v(S) &= v^{W \cup T^*}(S) + \delta_{T^*q} + \kappa \\
         &\leq C_S(\sigma_0) - C_S(\sigma) + \sum_{i \in T^*} \gamma_i &&\text{by \eqref{eq:greedyopt}}\\
          &\leq C_S(\sigma_0) - C_S(\sigmagreedy) &&\text{by \eqref{eq:greedycost}}. 
\end{align*}
Thus the cost decreased obtained by the greedy algorithm is at least $v(S)$, and hence equal to $v(S)$.
\end{proofof}

\paragraph{Convexity.}
Recall that a game is called \emph{convex} if its characteristic function $v$ is \emph{supermodular}, i.e., 
\[ v(S \cup T) + v(S \cap T) \ge v(S) + v(T) \qquad \forall S,T \subseteq N. \]
A function $v$ is \emph{modular} when we have equality in the above.
We will make use of a lemma due to Lov\'asz about supermodular functions. Recall that a set function $f$ is \emph{monotone} if $f(S) \leq f(T)$ for all $S \subseteq T$.
\begin{lemma}\cite{lovasz1983submodular}
\label{lem:lovasz}
Let $f$ and $g$ be supermodular functions, with $f-g$ monotone.
Then $\max \{ f,g\}$ is supermodular. 
\end{lemma}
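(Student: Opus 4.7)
Write $h := \max\{f,g\}$. The plan is to verify the supermodularity inequality $h(S \cup T) + h(S \cap T) \ge h(S) + h(T)$ directly, by splitting on which of $f,g$ attains the maximum at $S$ and at $T$. If $h(S) = f(S)$ and $h(T) = f(T)$, then since $h \ge f$ everywhere,
\[
    h(S \cup T) + h(S \cap T) \ge f(S \cup T) + f(S \cap T) \ge f(S) + f(T) = h(S) + h(T),
\]
using supermodularity of $f$. The case $h(S) = g(S)$, $h(T) = g(T)$ is symmetric.

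The interesting case, by symmetry, is $h(S) = f(S)$ and $h(T) = g(T)$. Here the hypothesis that $f - g$ is monotone must come in. Since $S \cap T \subseteq T$, monotonicity gives $(f-g)(S \cap T) \le (f-g)(T)$, i.e.\ $g(S \cap T) - f(S \cap T) \ge g(T) - f(T)$. Using $h \ge f$ on $S \cup T$ and $h \ge g$ on $S \cap T$, followed by supermodularity of $f$,
\[
    h(S \cup T) + h(S \cap T) \ge f(S \cup T) + g(S \cap T) \ge f(S) + f(T) + \bigl(g(S \cap T) - f(S \cap T)\bigr),
\]
and substituting the consequence of monotonicity bounds the right-hand side below by $f(S) + f(T) + g(T) - f(T) = f(S) + g(T) = h(S) + h(T)$.

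The only subtlety is choosing which of the two supermodularity inequalities to invoke in the mixed case; applying supermodularity of $f$ (the function that achieves the maximum at $S$) leaves exactly the term $g(S \cap T) - f(S \cap T)$ which the monotonicity hypothesis controls via the containment $S \cap T \subseteq T$. Everything else is bookkeeping, so I do not anticipate a real obstacle.
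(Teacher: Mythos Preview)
The paper does not actually prove this lemma; it is stated with a citation to Lov\'asz and used as a black box. Your argument is correct and is the standard proof: the case split on which of $f,g$ attains the maximum at $S$ and at $T$, together with the monotonicity of $f-g$ applied along the containment $S \cap T \subseteq T$ in the mixed case, is precisely how the result is usually established.
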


We remark that in the case $f(\emptyset)=g(\emptyset)=0$ this lemma is trivial as it can be easily argued that $f(S)\ge g(S)$ for all coalitions $S$. However, we will apply this to more general supermodular functions. 

The following lemma is a simple corollary.
\begin{lemma}
\label{lem:prefix}
Let $f_k(S)= \max_{  1 \leq r \leq k} \bigg\{ \sum_{j=1}^r g_{j}(S) \bigg\}$, where each $g_j(S)$ is monotone and supermodular.
Then $f$ is supermodular.
\end{lemma}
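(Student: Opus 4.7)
The plan is to proceed by induction on $k$, peeling off the largest index and applying \Cref{lem:lovasz} at each step. Let $F_r(S) := \sum_{j=1}^r g_j(S)$ for $1 \le r \le k$; since each $g_j$ is supermodular and monotone, so is every $F_r$. Note $f_k = \max\{f_{k-1}, F_k\}$, which gives the natural inductive form.

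The base case $k=1$ is immediate since $f_1 = F_1 = g_1$ is supermodular by hypothesis. For the inductive step, I assume $f_{k-1}$ is supermodular and aim to apply \Cref{lem:lovasz} to $F_k$ and $f_{k-1}$. Supermodularity of $F_k$ follows from the sum decomposition, so the only thing to check is that one of the two differences is monotone.

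The key computation is
\[
F_k(S) - f_{k-1}(S) \;=\; F_k(S) - \max_{1 \le r \le k-1} F_r(S) \;=\; \min_{1 \le r \le k-1} \bigl( F_k(S) - F_r(S) \bigr) \;=\; \min_{1 \le r \le k-1} \sum_{j=r+1}^{k} g_j(S).
\]
Each inner sum $\sum_{j=r+1}^{k} g_j$ is monotone (as a sum of monotone functions), and the pointwise minimum of monotone functions is monotone. Hence $F_k - f_{k-1}$ is monotone, and \Cref{lem:lovasz} yields that $f_k = \max\{f_{k-1}, F_k\}$ is supermodular, completing the induction.

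There is no real obstacle here; the only subtlety is recognizing that the "rolling max" rewriting $f_k = \max\{f_{k-1}, F_k\}$ aligns the statement with Lov\'asz's hypothesis, and that the monotonicity of the difference reduces, via a min-of-sums identity, to the monotonicity assumption on the $g_j$. The monotonicity of the $g_j$ is used in an essential way; without it we would have no reason for $F_k - f_{k-1}$ to be monotone.
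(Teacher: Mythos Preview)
Your proof is correct and follows essentially the same approach as the paper: define the partial sums, write $f_k = \max\{f_{k-1}, F_k\}$, and apply \Cref{lem:lovasz} after verifying that $F_k - f_{k-1}$ is monotone via the identity rewriting it as a minimum of tail sums of the $g_j$. The only cosmetic difference is that you spell out the base case and the induction more explicitly than the paper does.
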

\begin{proof}
Let $h_r(S):=\sum_{j=1}^r g_{j}(S)$ for each $r$.
    Being the sum of supermodular functions, $h_r$ is again supermodular.
Then $f_k(S) = \max \big \{ f_{k-1}(S),h_{k}(S) \big \}$.
By Lemma~\ref{lem:lovasz} it suffices to show that $h_k-f_{k-1}$ is monotone. Let $S\subseteq T$; then
\[    
h_k(S) - f_{k-1}(S) = \min_{1\le r\le k} \sum_{j=r}^{k} g_{j}(S) \le \min_{1\le r\le k} \sum_{j=r}^{k} g_{j}(T)  = h_k(T) - f_{k-1}(T),
\]
where the inequality follows by monotonicity of the $g_j$'s.
\end{proof}

We are now ready to prove the convexity of the SoSi game.
\begin{proofof}{\Cref{convex}}
Let $\sigma_{i}$ be the ordering of players when player $i \in S$ is being considered during the greedy algorithm, and let $f_i(S)$ be the cost saving obtained during this step of the greedy algorithm by moving player $i$ to its optimal location later in the ordering. 
By the correctness of the greedy algorithm, we have that $v(S)=\sum_{i \in S} f_i(S)$. 
It suffices then to show that $f_i(S)$ is supermodular for any fixed $i \in S$. 

Let us write $x \preceq y$ if $\sigma_i(x) \le \sigma_i(y)$. Then 
$f_i(S)=\max_{ r \succeq i } \big\{ \sum_{i \prec j \preceq r} \delta_{ij}(S) \big\}$.
For any $j \in N$, $\delta_{ij}(S) = \mathbbm{1}_S(j)p_iw_j - p_jw_i$ is monotone and modular, and hence supermodular.
Thus we can apply Lemma~\ref{lem:prefix}, and it follows that $v(S)$ is supermodular. 
\end{proofof}

\paragraph{Acknowledgements.}
This work was supported by Dutch Science Foundation (NWO) Vidi grant 016.Vidi.189.087.

\bibliographystyle{acm}
\bibliography{sample}

\end{document}